\newif\ifTWO \TWOfalse
\newcommand{\insertfig}[2]{
\begin{figure}\centering
\includegraphics[width=.95\columnwidth]{#1.pdf}
\caption{#2}\label{#1.fig}\end{figure}}
\newtheorem{theorem}{Theorem}[section]
\newtheorem{remark}{Remark}[section]
\def\ben{\begin{enumerate}}
\def\beq{\begin{equation}}
\def\beqa{\begin{eqnarray}}
\def\bit{\begin{itemize}}
\def\een{\end{enumerate}}
\def\eeq{\end{equation}}
\def\eeqa{\end{eqnarray}}
\def\eit{\end{itemize}}
\def\non{\nonumber\\}
\DeclareMathAlphabet{\mathsfbf}{OT1}{cmss}{sbc}{n}
\def\Am{{\bm{A}}}
\def\av{\bm{a}}
\def\balpha{\bar\alpha}
\def\diag{{\rm{diag}}}
\def\e{\mathrm{e}}
\def\E{\mathbb{E}}
\def\H{\mathsfbf{H}}
\def\Hc{\mathcal{H}}
\def\hv{\bm{h}}
\def\I{\mathcal{I}}
\def\muv{\bm{\mu}}
\def\N{\mathcal{N}}
\def\Peq{\mathsf{P}_\mathsf{eq}}
\def\Pfa{\mathsf{P}_\mathsf{fa}}
\def\Pmd{\mathsf{P}_\mathsf{md}}
\def\Pm{\bm{P}}
\def\R{\mathbb{R}}
\def\Sgm{\bm{\Sigma}}
\def\SNR{\mathsf{SNR}}
\def\sv{\bm{s}}
\def\T{\mathsfbf{T}}
\def\xih{\hat{\xi}}
\def\Ym{{\bm{Y}}}
\def\yv{{\bm{y}}}
\def\Zm{\bm{Z}}
\def\zv{\bm{z}}
\def\0{\bm{0}}
\def\1{\bm{1}}
\def\<{\left\langle}
\def\>{\right\rangle}
\begin{document}
\title{A Lower Bound to the Receiver Operating Characteristic of a Cognitive 
Radio Network\\
{\Large
(submitted to the {\em IEEE Transactions on Information Theory}, July 2010)}}
\author{Giorgio Taricco}
\maketitle
\def\thefootnote{$*$}
\footnotetext{
Giorgio Taricco is currently with Politecnico di Torino (DELEN), corso Duca 
degli Abruzzi 24, 10129, Torino, Italy (e-mail:~taricco@polito.it).
}
\def\thefootnote{\arabic{footnote}}

\begin{abstract}
Cooperative cognitive radio networks are investigated by using an 
information-theoretic approach.
This approach consists of interpreting the decision process carried out at the 
fusion center as a binary (asymmetric) channel, whose input is the presence of 
a primary signal and output is the fusion center decision itself.
The error probabilities of this channel are the false-alarm and 
missed-detection probabilities.
After calculating the mutual information between the binary random variable 
representing the primary signal presence and the set of sensor (or secondary 
user) output samples, we apply the data-processing inequality to derive a lower 
bound to the receiver operating characteristic.
This basic idea is developed through the paper in order to consider the cases 
of full channel and signal knowledge and of knowledge in probability 
distribution.
The advantage of this approach is that the ROC lower bound derived is 
independent of the particular type of spectrum detection algorithm and fusion 
rule considered.
Then, it can be used as a benchmark for existing practical systems.
\end{abstract}

\begin{IEEEkeywords}
Cognitive radio networks,
Data-Processing inequality,
Spectrum sensing,
Sensor networks,
Receiver Operating Characteristic.
\end{IEEEkeywords}

\section{Introduction}

Cognitive Radio (CR) technologies have gained considerable interest in the last 
few years because of two factors:
$i)$ the increasing demand for wireless spectrum from a large number of 
applications; and
$ii)$ the fact that many portions of licensed spectrum are neglected or 
underutilized by the regular 
licensees~\cite{fcc02,mit00,hay05,blum97,sri07,stev09}.

The concept of CR depends considerably on the application 
context~\cite{yucek09}.
Nevertheless, an official definition has been given by the Global Standards 
Collaboration (GSC) group within the ITU~\cite{fette09}: ``A radio or system 
that senses its operational electromagnetic environment and can dynamically and 
autonomously adjust its radio operating parameters to modify system operation, 
such as maximize throughput, mitigate interference, facilitate 
interoperability, access secondary markets.''
According to this definition, a CR device should be able to {\em autonomously} 
exploit unused portions spectrum to increase its own signalling rate without 
limiting the use of the radio spectrum from licensed users.
Thus, the most important feature of a CR device is the ability to detect the 
availability of {\em spectrum holes}~\cite{yucek09}, which can be accomplished 
by suitable {\em spectrum sensing} techniques.
The key role of spectrum sensing has been recognized in the technical 
literature as the enabling technique for CR systems.
Different strategies have been envisaged to effectively implement this feature 
and a comprehensive taxonomy can be found in~\cite{yucek09}.

A simple statement of the CR detection problem can be given as follows.
In a CR network there are two classes of users:
$i)$ {\em primary users}, i.e., those users who have license rights of some 
other form of priority with respect to the radio channel access;
$ii)$  {\em secondary users}, i.e., those users who have no licence rights or 
have more limited priority to the channel access than the primary users.
Secondary users are those who need CR capabilities, such as spectrum sensing, 
in order to avoid causing interference to primary users.
Thus, secondary users have to estimate the radio channel condition before 
attempting a transmission, i.e., they need to assess whether the channel is 
idle or busy (hypotheses $\Hc_0$ and $\Hc_1$, respectively).
This estimation is usually affected by error and characterized by two error 
probabilities:
\bit
\item
The {\em false-alarm} probability $\Pfa$, corresponding to the detection of 
hypothesis $\Hc_1$ when $\Hc_0$ is true.
\item
The {\em missed-detection} probability $\Pmd$, corresponding to the detection 
of hypothesis $\Hc_0$ when $\Hc_1$ is true.
\eit
Ideally, secondary users should operate toward reaching the goal of having 
$\Pfa=\Pmd=0$.
However, radio channel impairments prevent to attain this operating level, and 
a suitable tradeoff has to be sought.
Typically, secondary users are allowed a maximum level of interference to the 
primary users, which translates into a maximum probability of missed 
detection.
Then, the CR users can maximize their throughput by maximizing the false-alarm 
probability $\Pfa$ under the constraint of a given $\Pmd$.
Typical values of these probabilities have been set to $\Pmd=\Pfa=0.1$ in the 
contest of the developing standard IEEE 802.22~\cite{stev09}.
A complete picture of the performance of a CR system is provided by the 
receiver operating characteristic (ROC) plot.
The ROC is a plot of the missed-detection probability $\Pmd$ versus the 
false-alarm probability $\Pfa$.
Its derivation depends on the radio channel parameters (fading, noise power) 
and on the type of decision process implemented to detect the presence of a 
primary signal.

It has been widely recognized in the literature (see, {\em 
e.g.,}~\cite{yucek09} and references therein) that {\em user cooperation} 
enhances the performance of a CR system, both in terms of ROC, and of avoiding 
the {\em hidden primary user problem}.
This problem is considered one of the major challenges to the implementation of 
a CR system, and is similar to the hidden node problem experienced in Carrier 
Sense Multiple Accessing (CSMA)~\cite{yucek09}.
The hidden primary user problem derives from the shadowing of secondary users, 
occurring while sensing the primary signal transmission.
More precisely, a secondary user can be in the range of a primary user {\em 
receiver} but out of the range of another primary user {\em transmitter}.
Then, the secondary user senses the channel idle, because it cannot capture the 
primary user signal, and then starts its transmission.
However, since it is in the range of the other primary user receiver, it 
eventually interferes with the reception of the primary signal.
Having multiple secondary users sensing the channel reduces the chances of 
falling into this situation.

Fig.~\ref{block.fig} illustrates the block diagram of a CR system based on user 
cooperation.
We can see that the primary signal is present if $\xi=1$.
This signal is received by a set of $K$ secondary users (or sensors) which 
sample it during a certain observation window.
Secondary users can exploit individually this information in order to make a 
decision on the spectrum availability.
Otherwise, they can share it by sending a suitable signal through a control 
channel to a central processing unit (i.e., implementing user cooperation).
This unit provides for the {\em fusion} of the user information and is then 
called {\em fusion center} (FC)~\cite{yucek09}.

The goal of this paper is to analyze, by using information-theoretic results, 
the behavior of a cooperative CR system.
Our approach is based on the observation that appending the decision process 
implemented at the FC to the primary signal transmission channel yields an 
equivalent binary channel with input $\xi$, the random variable indicating the 
signal presence, and output $\xih$, the FC decision.
According to this interpretation, the false-alarm and missed-detection 
probabilities correspond to the two error probabilities of this binary channel 
(conditioned to $\xi=0$ and $\xi=1$).
In general, this channel turns out to be an asymmetric binary channel because 
the error probabilities are different.

Then, by using the data-processing inequality, we can calculate an upper bound 
to the channel capacity, which translates into a lower bound to the ROC.
This basic idea is developed through the paper in order to derive the lower 
bound of a cooperative CR system ROC, which is independent of the spectrum 
detection and fusion strategy used.
This lower bound can be applied to assess the validity of specific combinations 
of spectrum sensing and fusion strategy.

The remainder of the paper is organized as follows.
The system model is illustrated in Section \ref{system.sec}, where the key 
concept of applying the data-processing inequality to the cooperative CR system 
is introduced and analyzed in detail.
Section \ref{mi.sec} deals with the derivation of the mutual information of the 
cooperative CR system without the FC channel and detection.
This section considers the case of known channel gains and signal, as a 
baseline, and the case of known channel gain and signal distribution, as a 
further development.
Relevant asymptotic cases are also studied, in order to mitigate the numerical 
difficulties in the derivation of the results.
Section \ref{results.sec} illustrates the analytic results through numerical 
examples including.
Lower bounds to the ROC are reported in this section along with a comparison of 
these results with an energy detection estimator.
Finally, our conclusions are collected in Section \ref{concl.sec}.

\section{System model and optimum ROC}\label{system.sec}

\begin{figure}
\begin{center}\begin{tikzpicture}[>=stealth]
\path
(0,2) node(xi){$\xi\in\{0,1\}$}
(2,2) node(x)[circle,draw,fill=blue!30,inner sep=0mm]{$\times$}
(2,3.5) node(ptx)[rectangle,draw,fill=green!30,inner sep=1mm]
{\parbox{1.5cm}{\centering\sffamily\small Primary \\signal}}
(5,4) node(S1)[rectangle,draw,fill=blue!30,inner sep=1mm]{$S_1$}
(5,3) node(S2)[rectangle,draw,fill=blue!30,inner sep=1mm]{$S_2$}
(5,2) node{$\vdots$}
(5,1) node(SK)[rectangle,draw,fill=blue!30,inner sep=1mm]{$S_K$}
;
\draw[thick,->] (ptx) -- (x);
\draw[thick,->] (xi) -- (x);
\draw[thick,dotted,->] (x)--++(1.5,0)--++(0,2)--(S1);
\draw[thick,dotted,->] (x)--++(1.5,0)--++(0,1)--(S2);
\draw[thick,dotted,->] (x)--++(1.5,0)--++(0,-1)--(SK);
\draw[thick,->] (S1)--++(1,0)node[right]{$y_1(n)$};
\draw[thick,->] (S2)--++(1,0)node[right]{$y_2(n)$};
\draw[thick,->] (SK)--++(1,0)node[right]{$y_K(n)$};
\end{tikzpicture}\end{center}
\caption{
Block diagram of a cognitive radio system with input $\xi$, denoting the 
primary signal presence or absence, and output given by the set of sensor 
outputs $y_k(n)$ for $k=1,\dots,K$ and $n=1,\dots,N$.
Dotted lines represent the fading channels connecting the primary transmitter 
to the sensors (secondary users).}
\label{block.fig}
\end{figure}
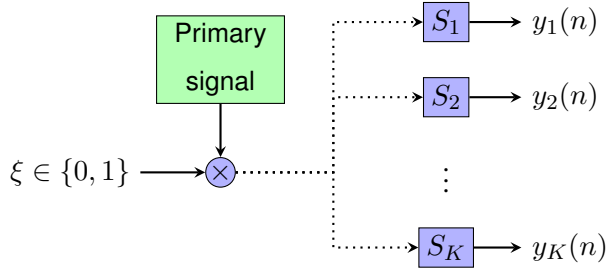

We consider a CR system (illustrated in Fig.\ \ref{block.fig}) equipped with 
$K$ sensors sensing the wireless spectrum over $N$ sampling times in order to 
provide information about the channel availability to secondary users.
Intentionally, the diagram does not show the terminal part consisting in the 
collection of the sensor measurements, their compacting, their transmission to 
the FC through a control channel, and the FC processing block providing the 
output decision about the signal presence.

We assume a block fading channel where the $n$th sampled signal received by 
sensor $k$ is given by
\beq\label{xk.eq}
  y_k(n) = \left\{\begin{array}{ll}
            z_k(n) & \xi=0 \\
    h_ks(n)+z_k(n) & \xi=1
  \end{array}\right.
\eeq
for $k=1,\dots,K$ and $n=1,\dots,N$.
Here, $z_k(n)\sim\N_c(0,\sigma_k^2)$%
\footnote{
Notation $\zv\sim\N_c(\muv,\Sgm)$ denotes a circularly symmetric complex 
Gaussian distributed vector with mean $\muv$, covariance matrix 
$\Sgm=\E[\zv\zv^H]-\muv\muv^H$, and pdf 
$\det(\pi\Sgm)^{-1}\exp[-(\zv-\muv)^H\Sgm^{-1}(\zv-\muv)]$.
}
are the iid received noise samples, $h_k$ are the block fading gain 
coefficients, $s(n)$ are the primary user's symbols, and $\xi$ denotes the the 
random variable indicating that the primary signal is present ($\xi=1$) or 
absent ($\xi=0$).
The variances $\sigma_k^2$ are known parameters.
We can interpret $\xi$ as the imponderable primary user decision to convey 
information through the channel at the time the CR system is trying to check 
the existence of a {\em spectrum hole}.

In the following we assume that the random variable $\xi$ is not necessarily 
equiprobable but rather we have $P(\xi=0)=\alpha$.
Then, $\alpha$ represents the {\em a priori} probability of primary signal 
absence.

As already mentioned, in this framework we do not consider the remaining part 
of the communication system beyond the block diagram of Fig.\ \ref{block.fig}.
This part consists of a distributed algorithm at the $K$ sensors and at the 
fusion center (FC) aimed at condensing the available channel sensing 
information (at the sensors), sending it to the FC, and jointly processing in 
order to make a reliable decision on the presence of a primary transmitted 
signal.

On the contrary, we regard the block diagram in Fig.\ \ref{block.fig} as a 
binary input-continuous output vector channel, which we study in order to 
derive the mutual information
\beq\label{ycap.eq}
  \I \triangleq I\Big(\xi;\{y_k(n)\}_{k=1,n=1}^{K,N}\Big).
\eeq
Using the {\em data processing inequality} (DPI)~\cite{cover}, we can see that 
the mutual information $\I$ upper bounds the mutual information of the channel 
corresponding to the completion of the transmission chain to the FC by any 
conceivable distributed algorithm.

Completing the transmission chain up to the FC's output yields a binary-input 
binary-output channel.
Denoting the FC's output by $\xih$, we have from the DPI:
\beq\label{dpi.eq}
  I(\xi;\xih) \le \I.
\eeq
Now, by the definition of the false-alarm and missed-detection probabilities 
(denoted by $\Pfa$ and by $\Pmd$, respectively), we have:
\[
\left\{\begin{array}{lll}
  \Pfa &=& P(\xih=1\mid\xi=0) \\
  \Pmd &=& P(\xih=0\mid\xi=1)
\end{array}\right..
\]
In general, we have a binary asymmetric channel whose transition probability 
matrix can be written as
\[
  \Pm = \begin{pmatrix}
  1-\Pfa & \Pfa \\ \Pmd & 1-\Pmd
  \end{pmatrix}.
\]
The mutual information, assuming $P(\xi=0)=\alpha$, is given by~\cite{cover}:
\beqa\label{asycap.eq}
  \lefteqn{I(\xi,\xih)\ =\ H_b(\alpha(1-\Pfa)+\balpha\Pmd)
  }
  \non
  &&
  \hspace*{2cm}
  - \alpha H_b(\Pfa) - \balpha H_b(\Pmd),
\eeqa
where $\balpha\triangleq1-\alpha$ and 
$H_b(p)\triangleq-p\log_2p-(1-p)\log_2(1-p)$ is the binary entropy 
function~\cite{cover}.

Finally, inserting \eqref{asycap.eq} into inequality \eqref{dpi.eq}, we obtain 
a relationship between the false-alarm and missed-detection probabilities, 
which represents a lower bound to the ROC for the given CR system.

The parametric dependence of the ROC lower bound on the mutual information is 
illustrated in Fig.\ \ref{genroc.fig}.
As expected, as $\I\uparrow1$, the ROC lower bounds decrease monotonically to 
$\Pfa=\Pmd=0$.

\insertfig{genroc}{
ROC lower bound curves corresponding to $\alpha=0.5$ and mutual information 
$\I$ indicated by the labels.}

\section{Calculation of $\I$}\label{mi.sec}

Let us define for convenience the following matrices and vectors:
\[
\left\{\begin{array}{lll}
  \Ym &\triangleq&  (y_k(n))_{k=1,n=1}^{K,N} \\[3mm]
  \Zm &\triangleq&  (z_k(n))_{k=1,n=1}^{K,N} \\[3mm]
  \hv &\triangleq&  (h_1,\dots,h_K)^\T \\
  \sv &\triangleq&  (s(1),\dots,s(N))^\H.
\end{array}\right..
\]
Then, we can simplify \eqref{xk.eq} by writing it as follows:
\beq\label{Y.eq}
  \Ym = \xi\ \hv\sv^\H+\Zm,
\eeq
and hence the mutual information \eqref{ycap.eq} becomes
\[
  \I = h(\Ym)-h(\Ym\mid\xi),
\]
where $h(\cdot)$ denotes the differential entropy~\cite{cover}.
First, it is plain to see that
\[
  h(\Ym\mid\xi)=h(\Zm)=N\sum_{k=1}^K\log_2(\pi\e\sigma_k^2).
\]
The evaluation of $h(\Ym)$ is more difficult.
We distinguish among different assumptions concerning the distribution of the 
secondary channel gain vector $\hv$ and the signal vector $\sv$.
In the following we consider the cases of $i)$ known gains and signal at the 
receiver, and of $ii)$ known gain and signal distribution at the receiver.

\subsection{Known gains and signal at the receiver}
\label{known.gains.signal.sec}

In order to equalize the noise variances, we transform the channel equation 
\eqref{Y.eq} by pre-multiplying by the inverse of the square root of the noise 
covariance matrix
\[
  \Sgm_z \triangleq \diag(\sigma_1^2,\dots,\sigma_K^2).
\]
We obtain
\beq\label{Y.eq1}
  \Ym = \xi\Am+\Zm,
\eeq
where $\Am\triangleq\Sgm_z^{-1/2}\hv\sv^\H$ and the entries of $\Zm$ are then 
iid as $\N_c(0,1)$.
This linear transformation is invertible and does not change the mutual 
information $\I$.
In order to calculate the mutual information, we resort to Theorem 
\ref{binchan.eq.th} (Appendix \ref{equiv.app}).
Since $\xi=0,1$, in order to use this result we can subtract $\Am/2$ and obtain 
symmetric input.
Theorem \ref{binchan.eq.th} tells us that the channel is equivalent to a 
binary-input real additive Gaussian channel with SNR $\|\Am\|^2/2$.
We obtain
\beqa\label{MI.eq}
  \I
  &=&
  H_b(\alpha)
  -\alpha\E\bigg[\log_2\bigg(1+\frac{\balpha}{\alpha}\e^{Z-\|\Am\|^2}\bigg)\bigg]
  \non
  &&
  -\balpha\E\bigg[\log_2\bigg(1+\frac{\alpha}{\balpha}\e^{Z-\|\Am\|^2}\bigg)\bigg],
\eeqa
where $Z\sim\N(0,2\|\Am\|^2)$%
\footnote{
Notation $Z\sim\N(\mu,\sigma^2)$ represents a real Gaussian random variable 
with mean $\mu$ and variance $\sigma^2$.
}

\begin{remark}
It is plain to see that \eqref{MI.eq} is invariant to the mapping 
$\alpha\mapsto1-\alpha$, i.e., to exchanging the {\em a priori} probabilities 
of primary signal presence and absence.
The ROC performance improves as these probabilities get closer to $0$ or to 
$1$, as illustrated in the following.
The symmetry of the resulting ROC lower bound suggests to define an {\em 
equilibrium point} corresponding to $\Pfa=\Pmd$, which is referred to as {\em 
equilibrium probability} and denoted by $\Peq$ in the sequel.
Under these operating conditions, the binary channel $\xi\to\xih$ is 
symmetric.
\end{remark}

\begin{remark}
It is worth noting that the mutual information $\I$, and hence the lower bound 
to the ROC, depend only on $\|\Am\|^2$ (in this case).
This parameter can be written as
\beq\label{snr.eq}
  \SNR \triangleq \|\Am\|^2 = \sum_{k=1}^K\frac{|h_k|^2\|\sv\|^2}{\sigma_k^2},
\eeq
which corresponds to the sum of the secondary users' receive SNR's.
For this reason, we refer to it in the following by the term {\em additive 
SNR}.
\end{remark}

\subsubsection{Limiting behavior for $\alpha\to0$}

Expanding \eqref{MI.eq} for $\alpha\to0$ we obtain:
\beqa
  \I
  &=&
  \E[1-Z+\SNR-\e^{Z-\SNR}]\alpha\log_2\e
  \non
  &&+
  \frac{\E[1-2\e^{\SNR-Z}+\e^{2Z-2\SNR}]}{2\log(2)}\alpha^2+
  O(\alpha^3)
  \non
  &=&
  \bigg[\SNR\ \alpha-
  \frac{1}{2}(\e^{2\SNR}-1)\alpha^2\bigg]\log_2\e+
  O(\alpha^3).
  \nonumber
\eeqa
We can see that the first- and second-order approximations represent upper and 
lower bounds, respectively, to the mutual information $\I$.
These lower bounds are illustrated in Fig.\ \ref{asy_ratio.fig}, plotting the 
ratio $\I/\alpha$ versus $\alpha$ and its second-order approximation (lower 
bound) for $\SNR=1$ (0~dB).
\insertfig{asy_ratio}{
Plot of the ratio $\I/\alpha$ and of its second-order approximation
$[\SNR-0.5(\e^{2\SNR}-1)\alpha]\log_2\e$
versus $\alpha$ for $\SNR=1$ (0~dB).
}

\subsubsection{Limiting behavior for $\SNR\to\infty$}

Applying Theorem \ref{asy.bin.mi.th} (Appendix \ref{asy.app}), we obtain the 
following bounds:
\beqa
  \lefteqn{
  H_b(\alpha) - 2\frac{\sqrt{\pi\alpha\balpha}}{\ln2}
  \frac{1}{\SNR^{1/2}}\ \e^{-\SNR/4}
  \ \le\ \I
  }
  \non
  &\le&
  H_b(\alpha) - 2\frac{\sqrt{\pi\alpha\balpha}}{\ln2}
  \frac{1}{\SNR^{1/2}}\ \e^{-\SNR/4}
  \non
  &&
  +\frac{\sqrt{\pi\alpha\balpha}[\pi^2+8+(\ln(\alpha/\balpha))^2]}{2\ln2}
  \frac{1}{\SNR^{3/2}}\ \e^{-\SNR/4}.
  \nonumber
\eeqa
These bounds yield, for $\SNR\to\infty$, the following asymptotic 
approximation:
\beq\label{MI.asy.eq}
  \I \sim H_b(\alpha) - 2\frac{\sqrt{\pi\alpha\balpha}}{\ln2}
  \frac{1}{\SNR^{1/2}}\ \e^{-\SNR/4}.
\eeq

\subsubsection{Limiting behavior for $\alpha\to0$ and $\SNR\to\infty$}

Finally, we can also expand \eqref{asycap.eq} for $\alpha\to0$ at the 
equilibrium point of the ROC, and obtain
\[
  I(\xi;\xih) = (1-2\Peq)\log_2\frac{1-\Peq}{\Peq}~\alpha-
  \frac{(1-2\Peq)^2}{2\Peq(1-\Peq)}\alpha^2+O(\alpha^3).
\]
When both $\alpha$ and $\Peq\to0$, we obtain the approximation
\beq\label{Peq.asy.eq}
  \Peq\ \approx\ \e^{-\SNR}.
\eeq

\subsection{Known gain and signal distribution at the receiver}

The case of known gain and signal distribution at the receiver can be handled 
by exploiting the results derived in Section \ref{known.gains.signal.sec}.
First, we notice that $\Am=\Sgm_z^{-1/2}\hv\sv^\H$ is a random matrix whose 
joint pdf of the entries depends on the distributions of the channel gain 
vector $\hv$ and of the signal vector $\sv$.
Then, starting from \eqref{MI.eq}, we can apply the chain rule for the mutual 
information and the independence between $\xi$ and the vectors $\hv,\sv$ in 
order to obtain the following result:
\beqa
  \I
  &=&
  I(\xi;\Am)+I(\xi;\Ym\mid\Am)
  \non
  &=&
  I(\xi;\Ym\mid\Am)
  \non
  &=&
  H_b(\alpha) - \E\bigg[
  \alpha\log_2\bigg(1 + \frac{\balpha}{\alpha}
  \e^{\sqrt{2\Gamma}Z_1-\Gamma}\bigg)
  \non
  &&
  +\ \balpha\log_2\bigg(1 + \frac{\alpha}{\balpha}
  \e^{\sqrt{2\Gamma}Z_1-\Gamma}\bigg)\bigg].
\eeqa
where $Z_1\sim\N_c(0,1)$ is independent of $\Gamma\triangleq\|\Am\|^2$, and the 
average is with respect to both $Z_1$ and $\Gamma$.
In accordance with eq.\,\eqref{snr.eq}, we have
\[
  \Gamma = \sum_{k=1}^K\frac{|h_k|^2\|\sv\|^2}{\sigma_k^2},
\]
but in this framework $\Gamma$ is a random variable whose mean value is defined 
as the additive SNR, i.e.,
\[
  \SNR \triangleq \E[\Gamma].
\]
The lower bound to the ROC depends on the distribution of $\Gamma$.
Some examples illustrate this dependence in Section \ref{results.sec}.

\section{Numerical examples}\label{results.sec}

In this section we illustrate the ROC bound obtained by numerical examples in 
order to compare the lower bounds obtained with some real estimation scheme.

\subsection{Known gains and signal at the receiver}

The first example reported in Fig.\ \ref{roc1.fig}, which consider the case of 
known channel gains and signal (Section \ref{known.gains.signal.sec}) with {\em 
a priori} probability of primary signal absence $\alpha=0.5$.
It can be noticed that the curves are symmetric with respect to exchanging the 
probabilities $\Pfa$ and $\Pmd$.
We can also notice a threshold behavior with respect to the $\SNR$, which is 
better illustrated in Fig.\ \ref{roc2.fig}.
The SNR threshold lies between $5$ and $10$~dB:
below the threshold, the equilibrium probability decreases slowly;
above the threshold, the decrease rate becomes faster.
The curves in Fig.\ \ref{roc2.fig} are lower bounds to the equilibrium 
probability versus the additive SNR for different values of the probability of 
signal absence (or presence) and in the asymptotic case of $\alpha\to0$, which 
is given by eq.~\eqref{Peq.asy.eq}.

\insertfig{roc1}{ROC lower bound in the case of known channel gains and signal 
for different values of the additive SNR (reported on the plot) and {\em a 
priori} probability of primary signal absence $\alpha=0.5$.}

\insertfig{roc2}{Lower bound to the equilibrium probability $\Peq$ versus the 
additive SNR with {\em a priori} probability of primary signal absence (or 
presence) $\alpha=0.5,0.1,10^{-2},10^{-3}$ (solid curve).
The dashed curve corresponds to $\alpha\to0$ (asymptotic case).}

\subsection{Comparison with an energy detection scheme}

A simple spectrum sensing scheme based on energy detection corresponds to the 
following estimation rule:
\beq
  \xih = \left\{\begin{array}{ll}
    0, & \|\Ym\|^2<\theta+\ln\alpha \\
    1, & \|\Ym\|^2>\theta+\ln\balpha \\
  \end{array}\right.
\eeq
From the equivalent channel equation \eqref{Y.eq1}, the resulting false-alarm 
and missed-detection probabilities are given by:
\beq
  \xih = \left\{\begin{array}{lll}
    \Pfa &=& P(\|\Zm\|^2>\theta+\ln\balpha) \\
    \Pmd &=& P(\|\Am+\Zm\|^2<\theta+\ln\alpha)
  \end{array}\right.
\eeq
Since $\|\Zm\|^2$ and $\|\Am+\Zm\|^2$ are central and noncentral 
$\chi^2$-distributed random variables, we can find explicit expressions of the 
two probabilities.
In fact, the cdf's can be found in standard textbooks, such as~\cite{proakis}.
We have:
\[
  P(\|\Zm\|^2<u) = \gamma(KN,u),
\]
where $\gamma(n,x)\triangleq\Gamma(n)^{-1}\int_x^\infty u^{n-1}\e^{-u}du$ is 
the normalized upper incomplete Gamma function, and
\[
  P(\|\Am+\Zm\|^2<u) = 1-Q_{KN}(\sqrt{2\|\Am\|^2},\sqrt{2u}),
\]
where
$Q_m(a,b)\triangleq\int_b^\infty x(x/a)^{m-1}\e^{-(x^2+a^2)/2}I_{m-1}(ax)dx$
is the generalized Marcum's $Q$ function defined in~\cite{proakis}.

Figures \ref{roc3_0db.fig} and \ref{roc3_10db.fig} show the ROC corresponding 
to an energy detector spectrum sensing scheme for two values of the product 
$KN$ and $\SNR=0$ and $10$~dB, respectively.
The diagrams also report the information-theoretic lower bound derived in 
Section \ref{known.gains.signal.sec}.
We can see that increasing the product $KN$ for a fixed $\SNR$ degrades the 
resulting ROC.
This can be understood by observing that the variances of $\|\Am+\Zm\|^2$ and 
$\|\Zm\|^2$ are proportional to $KN$ (they are $(1+2\,\SNR)KN$ and $KN$, 
respectively), while the mean value difference is equal to $\SNR$.
Therefore, as the $KN$ increases, the overlapping of the two pdf's increases, 
and hence the probabilities of false-alarm and missed-detection.

\insertfig{roc3_0db}{ROC obtained with an energy detector with $\SNR=0$ dB, 
$KN=1$ and $10$, and $\alpha=0.5$.
Solid lines are obtained analytically and markers correspond to Monte-Carlo 
simulation results.
The lowest dashed curve corresponds to the information-theoretic lower bound.}

\insertfig{roc3_10db}{Same as Fig.\ \ref{roc3_0db.fig} but $\SNR=10$ dB.}

\begin{remark}\label{snr.remark}
It is worth noting that the previous results hold for fixed additive SNR.
Then, increasing either $K$ or $N$ implies that the individual secondary user 
SNR's $|h_k|^2\|\sv\|^2/\sigma_k^2$ must decrease to keep the overall additive 
SNR constant.
On the contrary, if one fixes the individual SNR's, the additive SNR increases 
and both the lower bound and the energy-detector ROC improve.
Thus, the fact that the ROC curves decrease as $KN$ increases shall be 
interpreted by saying that the energy detector performance would improve if we 
could concentrate all the available sensors in a single one by keeping the 
total additive SNR constant.
\end{remark}

\subsection{Known gain and signal distribution at the receiver}

Here we consider the case of iid Rayleigh fading gains, where 
$\gamma_k\triangleq\E[|h_k|^2]/\sigma_k^2$, and $\|\sv\|^2$ has probability 
distribution $P(\|\sv\|^2=S_m)=p_m$ for $m=1,\dots,M$.
If we assume that all the $\gamma_k$ are different, the pdf of $\Gamma$ can be 
derived as follows:
\[
  p_\Gamma(G) = \sum_{m=1}^Mp_m\sum_{k=1}^K
  \frac{\exp(-G/(\gamma_kS_m))}{\gamma_kS_m}
  \prod_{\ell\ne k}\frac{1}{1-\gamma_\ell/\gamma_k}.
\]
We can use this result to calculate the double integral
\beqa\label{MI.ray.eq}
  \I
  &=&
  H_b(\alpha) -
  \int_{-\infty}^\infty \frac{\exp(-z^2/2)}{\sqrt{2\pi}}
  \int_0^\infty p_\Gamma(G)
  \non
  &&
  \bigg[
  \alpha\log_2\bigg(1 + \frac{\balpha}{\alpha}\e^{\sqrt{2G}z-G}\bigg)
  \non
  &&
  +\ \balpha\log_2\bigg(1 + \frac{\alpha}{\balpha}\e^{\sqrt{2G}z-G}\bigg)
  \bigg]dG\,dz.
\eeqa
As an illustrative example, we consider the following scenario:
\bit
\item $K=4$ secondary users.
\item $\gamma_k=(4+k)$ dB for $k=1,\dots,K$.
\item $\|\sv\|^2=1$.
\eit
The ROC curves are reported in Fig.\ \ref{roc_ray.fig}.
The lowest curve corresponds to the lower bound calculated by using 
\eqref{MI.ray.eq}.
The other curves correspond to the implementation of a spectrum sensing 
algorithm based on energy detection for different combinations of the number of 
secondary users $K$ and sampling times $N$.
In all cases, the same additive SNR is assumed, $\SNR=\sum_{k=1}^K\gamma_k$, 
namely,
\[
  10\log_{10}(10^{0.5}+10^{0.6}+10^{0.7}+10^{0.8})=12.66\ \mathrm{dB}.
\]
As already noticed in Remark \ref{snr.remark}, the best operating condition for 
the energy detector corresponds to the case of $K=N=1$ (at fixed additive 
SNR).

\insertfig{roc_ray}{
ROC curves corresponding to $\alpha=0.5$, $K=4$ secondary users, and Rayleigh 
fading.
The solid curve is the lower bound.
The other curves correspond to energy detection (ED) with different 
combinations of $K$ and $N$ and constant additive SNR ($12.66$ dB).
}

\section{Conclusions}\label{concl.sec}

In this work we proposed an information-theoretic method to derive a lower 
bound to the receiver operating characteristic of a cognitive radio network 
based on cooperative sensors.
The bound stems from the application of the data-processing inequality to the 
binary asymmetric channel arising by considering the primary signal presence as 
a binary input and the fusion center decision on the signal presence as a 
binary output.
The bound takes into account the possible knowledge of the {\em a priori} 
probability of primary signal presence and applies to every kind of 
single-input multiple-output channel connecting the primary signal to the 
multiple cooperative sensors (i.e., the secondary users of the cognitive radio 
system).

Key advantages of this approach are: $i)$ independence from the implementation 
of the connection between the sensors and the fusion center; and $ii)$ 
independence from the fusion rule.
Both features derive from the information-theoretic method we have followed, 
based on the equivalence between the actual channel model (connecting the 
primary transmitter to the sensors and then to the fusion center) and the 
binary asymmetric channel with error probabilities corresponding to the 
false-alarm and missed-detection events.

In order to illustrate this basic idea, we considered two scenarios of interest 
for cognitive radio networks:
\ben
\item
The case of full channel gain and primary signal information at the fusion 
center.
\item
The case of full distribution information about the channel gain and the 
primary signal at the fusion center.
\een

The first case has been investigated in full detail by deriving the mutual 
information $\I$ between the primary signal presence variable $\xi$ and the set 
of sensor observations $\Ym$.
This expression has been analyzed asymptotically, both for large additive SNR 
and for vanishing $\alpha$ (probability of signal absence).
The asymptotic expression for large additive SNR, eq.~\eqref{MI.asy.eq}, is 
based on an integral which is an extension of the one calculated 
in~\cite[Prob.~4.12]{rich-urb}.
In essence, this expression can lead to an asymptotic expansion of the mutual 
information of the binary symmetric channel for large SNR.
Full details on the derivation are reported in Appendix~\ref{asy.app}.

In the second case, a general expression of the mutual information $\I$ 
required to obtained the receiver operating characteristic lower bound has been 
derived as the average value of an expression depending on two random variables 
(the random additive SNR $\Gamma$ and the auxiliary Gaussian random variable 
$Z_1$).
This average leads to a double integral, which has been expanded in detail in 
the case of iid Rayleigh distributed channel gains.
In the numerical results section, the distribution of $\Gamma$ is reported in a 
fairly general case, and numerical results are included for illustration 
purposes.

Finally, it is worth mentioning that the series expansion of the mutual 
information of a binary input additive Gaussian channel 
(Theorem~\ref{asy.bin.mi.th}) is also a novel contribution of this paper.
It extends the series expansion of the capacity of the same channel, which can 
be found in~\cite[Prob.~4.12]{rich-urb}.
In the present context, this series expansion is needed to account for the 
possible knowledge of the {\em a priori} probability of primary signal 
presence, which is available in several cognitive radio system and worth being 
used to improve the quality of the decision rule at the fusion center.

\appendices
\section{Equivalence of binary input Gaussian channels}\label{equiv.app}

In this appendix we show that a binary-input, vector-output additive Gaussian 
channel is equivalent, as far as concerns the mutual information, to a 
binary-input additive Gaussian channel with scalar output.

Let the binary input be $X=\pm1$ with $\alpha\triangleq P(X=-1)$ and 
$\balpha\triangleq1-\alpha=P(X=+1)$.

Let the channel equation be
\beq
  \yv = X\av+\zv,
\eeq
where $\av\in\R^{n\times1}$ is a given constant vector and $\zv$ is a vector of 
iid Gaussian random variables distributed as $\N(0,1)$.

The mutual information is given by
\[
  I(X;\yv) = h(\yv)-h(\yv|X) = h(\yv)-h(\zv).
\]
We know that $h(\zv)=(n/2)\log_2(2\pi\e)$.
To calculate $h(\yv)$, we note that
\[
  p_\yv(\yv) =
  \frac{\alpha\exp(-\|\yv+\av\|^2/2)+\balpha\exp(-\|\yv-\av\|^2/2)}{(2\pi)^{n/2}}.
\]
Hence,
\beqa\label{bin.vec.mi.eq}
  \lefteqn{I(X;\yv)}
  \non
  &=&
  -\E\bigg[\log_2(\alpha\e^{-\|\yv+\av\|^2/2}+\balpha\e^{-\|\yv-\av\|^2/2})\bigg]
  -\frac{n\log_2\e}{2}
  \non
  &=&
  H_b(\alpha)
  -\alpha\,\E\bigg[\log_2\bigg(1+\frac{\balpha}{\alpha}\e^{2\av^\T\zv-2\|\av\|^2}\bigg)\bigg]
  \non
  &&
  -\balpha\,\E\bigg[\log_2\bigg(1+\frac{\alpha}{\balpha}\e^{2\av^\T\zv-2\|\av\|^2}\bigg)\bigg].
\eeqa
The scalar product $\av^\T\zv$ is a real Gaussian random variable with zero 
mean and variance $\|\av\|^2$.

Similarly, we can find the mutual information of the channel
\beq
  Y = aX+Z,
\eeq
where $a\in\R$ and $Z\in\N(0,1)$.
In this case,
\beqa\label{bin.scal.mi.eq0}
  \lefteqn{I(X;Y)}
  \non
  &=&
  -\E\bigg[\log_2(\alpha\e^{-(Y+a)^2/2}+\balpha\e^{-(Y-a)^2/2})\bigg]
  -\frac{\log_2\e}{2}
  \non
  &=&
  H_b(\alpha)
  -\alpha\,\E\bigg[\log_2\bigg(1+\frac{\balpha}{\alpha}\e^{2aZ-2a^2}\bigg)\bigg]
  \non
  &&
  -\balpha\,\E\bigg[\log_2\bigg(1+\frac{\alpha}{\balpha}\e^{2aZ-2a^2}\bigg)\bigg].
\eeqa
Then, the mutual information eqs.\ \eqref{bin.vec.mi.eq} and 
\eqref{bin.scal.mi.eq0} coincide provided that $a^2=\|\av\|^2$.
These results are summarized by the following theorem.

\begin{theorem}\label{binchan.eq.th}
A binary-input vector-output additive Gaussian channel $\yv=X\av+\zv$, where 
the entries of $\zv$ are iid and distributed as $\N(0,1)$, is equivalent, in 
terms of mutual information, to a binary-input real additive Gaussian channel 
with SNR $\gamma=\|\av\|^2$.
The mutual information is:
\beqa\label{bin.scal.mi.eq}
  I(X;Y)
  &=&
  H_b(\alpha)
  -\alpha\,\E\bigg[\log_2\bigg(1+\frac{\balpha}{\alpha}
  \e^{2\sqrt\gamma Z_1-2\gamma}\bigg)\bigg]
  \non
  &&
  -\balpha\,\E\bigg[\log_2\bigg(1+\frac{\alpha}{\balpha}
  \e^{{2\sqrt\gamma Z_1-2\gamma}}\bigg)\bigg],
\eeqa
where $Z_1\sim\N(0,1)$.
\end{theorem}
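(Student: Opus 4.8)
The plan is to establish the equivalence purely at the level of mutual information by computing $I(X;\yv)$ and $I(X;Y)$ in closed form and showing that both reduce to the same function of a single scalar parameter, namely the variance $\gamma=\|\av\|^2$ (respectively $a^2$) of the relevant Gaussian statistic. Since the claimed equivalence is only an equality of mutual informations, I would not attempt to exhibit a one-to-one information-lossless map between the two channels; I would simply derive and match the two formulas.

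First I would write $I(X;\yv)=h(\yv)-h(\yv\mid X)$ and use that, conditioned on $X$, the output $\yv$ is a shifted Gaussian, so that $h(\yv\mid X)=h(\zv)=(n/2)\log_2(2\pi\e)$. The differential entropy $h(\yv)=-\E[\log_2 p_\yv(\yv)]$ is then evaluated from the two-component Gaussian-mixture density $p_\yv$ displayed above. The main computation is to substitute $\yv=X\av+\zv$ and complete the square in the two exponents: for $X=+1$ the factor $\e^{-\|\yv-\av\|^2/2}$ reduces to $\e^{-\|\zv\|^2/2}$ while $\e^{-\|\yv+\av\|^2/2}$ picks up $\e^{-2\av^\T\zv-2\|\av\|^2}$, and symmetrically for $X=-1$. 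Factoring out the common $\e^{-\|\zv\|^2/2}$ turns each conditional log-likelihood into a constant plus a term of the form $\log_2(1+\tfrac{\balpha}{\alpha}\e^{2\av^\T\zv-2\|\av\|^2})$ or its dual. The bookkeeping step I expect to be most error-prone is verifying that the leftover $-\tfrac12\|\zv\|^2\log_2\e$ contributions, once $\E[\|\zv\|^2]=n$ is used, cancel exactly against $h(\zv)$, leaving the clean expression $H_b(\alpha)-\alpha\E[\cdots]-\balpha\E[\cdots]$, i.e.\ equation~\eqref{bin.vec.mi.eq}.

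The conceptual heart of the argument is the observation that the surviving expectations depend on the $n$-dimensional noise $\zv$ only through the scalar $\av^\T\zv$. Because $\zv\sim\N(\0,\Id)$ is isotropic, its projection $\av^\T\zv$ is a one-dimensional Gaussian with zero mean and variance $\|\av\|^2$, so the $n$-fold integral collapses to a single Gaussian integral. Writing $\av^\T\zv\stackrel{d}{=}\sqrt{\gamma}\,Z_1$ with $Z_1\sim\N(0,1)$ and $\gamma=\|\av\|^2$ then casts \eqref{bin.vec.mi.eq} into exactly the form \eqref{bin.scal.mi.eq} claimed in the theorem. I would also invoke the symmetry $\av^\T\zv\stackrel{d}{=}-\av^\T\zv$ to replace $\e^{-2\av^\T\zv}$ by $\e^{2\av^\T\zv}$, which is what lets the two conditional expectations share a common integrand.

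Finally, I would repeat the identical computation for the scalar channel $Y=aX+Z$, which is merely the $n=1$ instance of the above with statistic $aZ\sim\N(0,a^2)$, obtaining \eqref{bin.scal.mi.eq0}. Comparing the two results, each is the same function of a single parameter, the variance of its Gaussian statistic, so setting $a^2=\|\av\|^2=\gamma$ forces $I(X;\yv)=I(X;Y)$ and completes the proof. The only genuine obstacle is the sign-and-constant bookkeeping in the completing-the-square step; everything else is a change of variables that reduces the dimension from $n$ to $1$.
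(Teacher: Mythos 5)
Your proposal is correct and follows essentially the same route as the paper's Appendix~\ref{equiv.app}: decompose $I(X;\yv)=h(\yv)-h(\zv)$, evaluate $h(\yv)$ from the two-component Gaussian mixture to reach \eqref{bin.vec.mi.eq}, observe that the result depends on $\zv$ only through $\av^\T\zv\sim\N(0,\|\av\|^2)$, and match this against the scalar-channel formula \eqref{bin.scal.mi.eq0} under $a^2=\|\av\|^2$. Your explicit appeal to the sign symmetry $\av^\T\zv\stackrel{d}{=}-\av^\T\zv$ is a minor bookkeeping point the paper leaves implicit, not a different argument.
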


\begin{remark}
The equivalence between the scalar and $n$-vector channels in terms of mutual 
information can be predicted by observing that the vector channel is a 
combination of $n$ parallel Gaussian channels.
Then, provided the receiver knows the vector $\av$, the useful part of the 
signal can be combined coherently at the receiver while the noise is combined 
incoherently.
This implies an $n$-fold increase of the SNR, for a given mutual information, 
when passing from the scalar to the $n$-vector channel.
\end{remark}

\section{Asymptotic approximation of \eqref{bin.scal.mi.eq}}\label{asy.app}

In this appendix we derive an asymptotic approximation of the mutual 
information \eqref{bin.scal.mi.eq}.
This result extends \cite[Prob.~4.12]{rich-urb}, corresponding to the 
equiprobable input case.

\begin{theorem}\label{asy.bin.mi.th}
The asymptotic expansion of \eqref{bin.scal.mi.eq} for $\gamma\to\infty$ is 
given by
\beq\label{asy.bin.mi.eq}
  I(X;Y) = H_b(\alpha) - \sum_{n=0}^\infty
  \frac{(-1)^nk_n(\alpha)}{\gamma^{n+1/2}}\ \e^{-\gamma/2},
\eeq
where
\beqa\label{asy.bin.mi.coef}
  k_n(\alpha)
  &=&
  \frac{2\sqrt{\pi\alpha\balpha}}{n!\ln2}
  \sum_{k=0}^n\binom{2n}{2k} \pi^{2(n-k)} |E_{2(n-k)}|
  \non
  &&\times
  \sum_{m=0}^{k}2^{2m}(2k)^{(2m)}(\ln(\alpha/\balpha))^{2(k-m)}.
\eeqa
Consecutive partial sums in \eqref{asy.bin.mi.eq} are lower and upper bounds to 
$I(X;Y)$.
\end{theorem}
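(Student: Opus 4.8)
The plan is to prove Theorem~\ref{asy.bin.mi.th} by a Laplace-type localization of the two expectations in \eqref{bin.scal.mi.eq}, followed by a Mellin-transform evaluation of the moment integrals that arise. First I would write $I(X;Y)=H_b(\alpha)-S$, where
$S=\alpha\,\E\big[\log_2(1+\tfrac{\balpha}{\alpha}\e^{U})\big]+\balpha\,\E\big[\log_2(1+\tfrac{\alpha}{\balpha}\e^{U})\big]$ and $U=2\sqrt{\gamma}Z_1-2\gamma$ with $Z_1\sim\N(0,1)$. Each expectation is of the form $g(r)=\int \tfrac{\e^{-t^2/2}}{\sqrt{2\pi}}\log_2(1+r\e^{2\sqrt\gamma t-2\gamma})\,dt$. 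The logarithmic factor is negligible below the threshold $t_\star\approx\sqrt\gamma$ and grows linearly above it, so the whole mass concentrates near $t=\sqrt\gamma$, where the Gaussian density is of order $\e^{-\gamma/2}$; this already explains the $\e^{-\gamma/2}$ and $\gamma^{-1/2}$ scalings. Concretely I would substitute $w=2\sqrt\gamma t-2\gamma$, i.e.\ $t=\sqrt\gamma+w/(2\sqrt\gamma)$, to obtain
\[
  g(r)=\frac{\e^{-\gamma/2}}{2\sqrt{2\pi\gamma}}\int_{-\infty}^\infty \e^{-w/2}\,\e^{-w^2/(8\gamma)}\,\log_2(1+r\e^{w})\,dw.
\]

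The asymptotic series then comes from expanding $\e^{-w^2/(8\gamma)}=\sum_j\frac{(-1)^j}{j!}(w^2/8\gamma)^j$ and integrating term by term, which produces the powers $\gamma^{-j-1/2}$ and the moment integrals $J_j(r)=\int_{-\infty}^\infty w^{2j}\e^{-w/2}\log_2(1+r\e^w)\,dw$. The crux is evaluating $J_j(r)$, and here I would use the representation $J_j(r)=\frac{1}{\ln2}\,\frac{d^{2j}}{d\beta^{2j}}\big[r^{-\beta}\Phi(\beta)\big]\big|_{\beta=-1/2}$ with the Mellin master integral $\Phi(\beta)=\int_0^\infty v^{\beta-1}\ln(1+v)\,dv=\frac{\pi}{\beta\sin(\pi\beta)}$, valid in $-1<\Re\beta<0$ and obtained by a single integration by parts from the Beta integral. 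Expanding $\Phi$ about $\beta=-1/2$ brings in $\sec(\pi\epsilon)=\sum_m\frac{|E_{2m}|}{(2m)!}(\pi\epsilon)^{2m}$, which is the precise source of the Euler numbers $E_{2(n-k)}$ and the powers $\pi^{2(n-k)}$, while the factor $r^{-\beta}=\e^{-\beta\ln r}$ generates the powers of $\ln r$.

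Finally I would assemble $S$ from $\alpha J_j(\balpha/\alpha)+\balpha J_j(\alpha/\balpha)$. Evaluating the $r$-dependent factor $\alpha r^{-\beta}+\balpha r^{\beta}$ (with $r=\balpha/\alpha$) and its derivatives at $\beta=-1/2$, one checks that the odd-order derivatives vanish, collapsing the dependence to a common factor $\sqrt{\alpha\balpha}$ and to even powers $(\ln(\alpha/\balpha))^{2(k-m)}$; a Leibniz expansion then reorganizes the resulting double sum into the stated $k_n(\alpha)$, the $n=0$ term recovering the leading $\e^{-\gamma/2}/\sqrt\gamma$ behaviour with coefficient proportional to $\sqrt{\pi\alpha\balpha}/\ln2$. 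The bounding claim I would obtain from the enveloping property of the exponential series: for every $x\ge0$ and every $N$,
\[
  (-1)^{N+1}\Big(\e^{-x}-\sum_{j=0}^N\tfrac{(-x)^j}{j!}\Big)\ge0 .
\]
Applying this with $x=w^2/(8\gamma)$ under the nonnegative kernel $\e^{-w/2}\log_2(1+r\e^w)\ge0$ shows that the truncation error of each $g(r)$ carries the sign of the first omitted term; hence consecutive partial sums bracket $g(r)$, and therefore bracket $S$ and (after subtraction from $H_b(\alpha)$) also $I(X;Y)$, which is exactly the stated lower/upper-bound property.

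I expect the main obstacle to be the combinatorial bookkeeping in the last step: matching my double sum, indexed by the differentiation order of $\Phi$ and by the power of $\ln r$, to the stated form with the falling factorials $(2k)^{(2m)}$ and the exact powers of two requires a careful reindexing, and tracking every normalization constant (the $\sqrt\gamma$ factors and the various factors of $2$) without slips is delicate. A secondary subtlety to flag is that the expansion is only \emph{asymptotic} — the integrals $J_j(r)$ grow factorially through the poles of $\Phi$, so the series diverges — and the rigorous content is precisely the finite-$N$ sign estimate above rather than summation of the series.
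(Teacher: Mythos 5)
Your proposal is correct, and its skeleton coincides with the paper's proof: the same reduction $I(X;Y)=H_b(\alpha)-S$, the same change of variable extracting $\e^{-\gamma/2}$ (your prefactor $\e^{-\gamma/2}/(2\sqrt{2\pi\gamma})=\e^{-\gamma/2}/\sqrt{8\pi\gamma}$ reproduces the paper's eq.~\eqref{series.eq} exactly), the same term-by-term expansion of $\e^{-w^2/(8\gamma)}$, and the identical enveloping argument for the bracketing claim --- the paper likewise deduces it from the fact that partial sums of the exponential series alternately bound $\e^{-x}$ under the nonnegative kernel $\e^{-w/2}\log_2(1+r\e^w)$. Where you genuinely diverge is in evaluating the moment integrals $J_j(r)$ (the paper's $c_n(\rho)$, up to normalization). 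The paper differentiates $c_n(\rho)$ in $\rho$, reduces $c_n'(\rho)$ to the table integral $\int_{-\infty}^\infty u^{2k}/\cosh u\,du=2(\pi/2)^{2k+1}|E_{2k}|$ \cite[3.523-4]{grad}, and integrates back from $0$ to $\rho$ via an explicit antiderivative of $(\ln u)^n/\sqrt{u}$, which is where its powers of two and falling factorials originate. You instead evaluate the same integrals in one stroke through the Mellin master formula $\Phi(\beta)=\pi/(\beta\sin\pi\beta)$ and $2j$-fold differentiation at $\beta=-1/2$: writing $\beta=-1/2+\epsilon$ gives $\beta\sin\pi\beta=-(1/2-\epsilon)\cos\pi\epsilon$, so the Euler numbers enter through $\sec(\pi\epsilon)$ and the powers of two through the geometric factor $1/(1-2\epsilon)$, in precise correspondence with the paper's two sources; moreover your observation that $\alpha r^{-\beta}+\balpha r^{\beta}=2\sqrt{\alpha\balpha}\cosh(\epsilon\ln r)$ at $r=\balpha/\alpha$ makes the common factor $\sqrt{\alpha\balpha}$ and the restriction to even powers of $\ln(\alpha/\balpha)$ transparent, a step the paper leaves implicit in its closing sentence (``we apply \eqref{cn.rho.eq} to derive \eqref{asy.bin.mi.eq} and \eqref{asy.bin.mi.coef}''). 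The two computations are equivalent in substance --- the $\cosh$ integral is itself the Mellin fact behind $\sec(\pi\epsilon)$ --- so what each buys is: the paper's route is elementary (one table integral plus an antiderivative) and yields $c_n(\rho)$ for arbitrary $\rho$, enabling its cross-check against \cite[Prob.~4.12]{rich-urb} at $\rho=1$; your route packages the whole coefficient in a single parametric formula and exposes the parity cancellations structurally. What remains on your side is only the Leibniz reindexing to match \eqref{asy.bin.mi.coef}, which you correctly flag as bookkeeping rather than a gap. One point where you are in fact more careful than the paper: you note that the series diverges for fixed $\gamma$ (the coefficients grow factorially), so \eqref{asy.bin.mi.eq} holds only as an enveloping asymptotic expansion, whereas the paper writes it as an equality while relying on the same finite-truncation sign estimate.
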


\begin{proof}
We start by considering the integral
\beqa\label{series.eq}
  \lefteqn{
  \int_{-\infty}^\infty\ln(1+\rho\ \e^{z-2\gamma})
  \e^{-z^2/(8\gamma)}\frac{dz}{\sqrt{8\pi\gamma}}
  }
  \non
  &=&
  \int_{-\infty}^\infty\ln(1+\rho\ \e^z)
  \e^{-(z+2\gamma)^2/(8\gamma)}\frac{dz}{\sqrt{8\pi\gamma}}
  \non
  &=&
  \int_{-\infty}^\infty\ln(1+\rho\ \e^z)
  \e^{-z^2/(8\gamma)-z/2-\gamma/2}\frac{dz}{\sqrt{8\pi\gamma}}
  \non
  &=&
  2\sqrt\pi\sum_{n=0}^\infty \frac{(-1)^n\ c_n(\rho)}{(8\gamma)^{n+1/2}}\ 
\e^{-\gamma/2}
\eeqa
where
\beq
  c_n(\rho) \triangleq \frac{1}{2\pi}
  \int_{-\infty}^\infty\ln(1+\rho\ \e^z)\ \frac{z^{2n}}{n!}\ \e^{-z/2} dz.
\eeq
Since consecutive partial sums of the series expansion of $\e^{-x}$ are lower 
and upper bound of the limit, also consecutive partial sums of 
\eqref{series.eq} are lower and upper bounds of the lhs.

To calculate the coefficients $c_n(\rho)$, we notice that
\beqa\label{cn1.eq}
  \lefteqn{c_n'(\rho)}
  \non
  &=&
  \frac{1}{4\pi\ n!\sqrt\rho}
  \int_{-\infty}^\infty\frac{z^{2n}}{\cosh((z+\ln\rho)/2)}dz
  \non
  &=&
  \frac{1}{2\pi\ n!\sqrt\rho}
  \int_{-\infty}^\infty\frac{(2u-\ln\rho)^{2n}}{\cosh u}du
  \non
  &=&
  \frac{1}{2\pi\ n!\sqrt\rho}
  \sum_{k=0}^n\binom{2n}{2k}2^{2k}(\ln\rho)^{2(n-k)}
  \int_{-\infty}^\infty\frac{u^{2k}}{\cosh u}du
  \non
  &=&
  \frac{1}{2\,n!}
  \sum_{k=0}^n\binom{2n}{2k}\pi^{2k}
  \frac{(\ln\rho)^{2(n-k)}}{\sqrt{\rho}}|E_{2k}|
\eeqa
by using the integral \cite[3.523-4]{grad}
\[
  \int_{-\infty}^\infty\frac{u^{2k}}{\cosh u}du = 
2\bigg(\frac{\pi}{2}\bigg)^{2k+1}|E_{2k}|,
\]
for every integer $k\ge0$,
where $E_n$ is the $n$th Euler number ($E_0=1,E_2=-1,E_4=5,E_6=-61,\dots$).
Since we have:
\[
  \int_0^x\frac{(\ln u)^n}{\sqrt{u}}du
  = 2\sqrt{x}\sum_{k=0}^n(-1)^k2^kn^{(k)}(\ln x)^{n-k},
\]
for every integer $n\ge0$ and with $n^{(m)}\triangleq(n!/(n-m)!)$, we can 
integrate \eqref{cn1.eq} term by term and obtain the following final 
expression:
\beqa\label{cn.rho.eq}
  c_n(\rho)
  &=&
  \frac{1}{2\,n!}
  \sum_{k=0}^n\binom{2n}{2k}\pi^{2k}|E_{2k}|
  \int_0^\rho\frac{(\ln u)^{2(n-k)}}{\sqrt{u}}du
  \non
  &=&
  \frac{\sqrt\rho}{n!}
  \sum_{k=0}^n\binom{2n}{2k} \pi^{2(n-k)} |E_{2(n-k)}|
  \non
  &&\times
  \sum_{m=0}^{2k}(-1)^m2^m(2k)^{(m)}(\ln\rho)^{2k-m}.
\eeqa
In the special case of $\rho=1$, we have:
\[
  c_n(1) = \frac{1}{n!}
  \sum_{k=0}^n2^{2k}(2n)^{(2k)}\pi^{2(n-k)} |E_{2(n-k)}|.
\]
Then, in accordance with \cite[Prob.~4.12]{rich-urb}, where $c_n$ corresponds 
to $c_n(1)/2^{2n}$, we have:
\beqa
  c_0(1) &=& 1 \non
  c_1(1) &=& 8+\pi^2 \non
  c_2(1) &=& 384+48\pi^2+5\pi^4 \non
  c_3(1) &=& 46080+5760\pi^2+600\pi^4+61\pi^6 \non
  &\vdots\nonumber
\eeqa

Finally, we apply \eqref{cn.rho.eq} to derive the asymptotic expansion 
\eqref{asy.bin.mi.eq} and the coefficients \eqref{asy.bin.mi.coef}.

\end{proof}



\begin{thebibliography}{XX}

\bibitem{fcc02}
FCC Spectrum Policy Task Force, ``Report of the Spectrum Efficiency Working
Group,'' Tech. rep. 02-135, Nov. 2002; \\
http://www.fcc.gov/sptf/files/SEWGFinal Report 1.pdf

\bibitem{mit00}
J.~Mitola, ``Cognitive Radio: An Integrated Agent Architecture for Software
Defined Radio,'' Ph.D. dissertation, KTH, Stockholm, Sweden, Dec. 2000.

\bibitem{hay05}
S.~Haykin, ``Cognitive radio: Brain-empowered wireless communications,''
{\em IEEE J. Select. Areas Commun.}, vol.~23, no.~2, pp.~201--220, Feb.~2005.

\bibitem{blum97}
R.S.\ Blum, S.A.\ Kassam, and H.V.\ Poor, ``Distributed detection with multiple 
sensors: Part II - Advanced topics,'' {\em Proc. IEEE}, vol.~85, no.~1, 
pp.~64--79, Jan.~1997.

\bibitem{sri07}
S.~Srinivasa and S.A.~Jafar, ``The throughput potential of Cognitive Radio: A
theoretical perspective,'' {\em IEEE Communications Magazine}, vol.~45, no~.5,
pp.~73--79, May 2007.

\bibitem{stev09}
C.R.\ Stevenson, G.\ Chouinard, Z.\ Lei, W.\ Hu, S.J.\ Shellhammer, and W.\ 
Caldwell, ``IEEE 802.22: The first cognitive radio wireless regional area 
network standard,'' {\em IEEE Communications Magazine}, vol.~47, no~.1,
pp.~130--138, Jan.\ 2009.

\bibitem{yucek09}
T.\ Y\"ucek and H.\ Arslan, ``A survey of spectrum sensing algorithms for
cognitive radio applications,'' {\em IEEE Commun.\ Surveys \& Tutorials},
vol.~11, no.~1, pp.~116--130, First quarter 2009.

\bibitem{fette09}
B.\ Fette, {\em Cognitive Radio Technology, 2nd Ed.} Burlington, MA: Academic 
Press, 2009.

\bibitem{cover}
T.M.Cover and J.A.Thomas, {\em Elements of Information Theory.} New York: 
Wiley, 1991.


\bibitem{proakis}
J.G.\ Proakis, {\em Digital Communications}, Third Ed.\ New York: Mc-Graw-Hill, 
1995.

\bibitem{rich-urb}
T.\ Richardson and R.\ Urbanke, {\em Modern Coding Theory}, New York: Cambridge 
University Press, 2008.

\bibitem{grad}
I.S.\ Gradshteyn and I.M.\ Ryzhik, {\em Table of Integrals, Series, and 
Products}, $7^{\mathrm{th}}$ edition, Elsevier, 2007.

\end{thebibliography}
\end{document}